	\newcommand{\ceil}[1]{\lceil #1 \rceil }
	\newcommand{\floor}[1]{\lfloor #1 \rfloor }
\definecolor{light-gray}{gray}{0.9}
\newtheorem{definition}{Definition}%
				\newcommand{\bt}[1][]{\ensuremath{\ifthenelse{\equal{#1}{}}{\mathit{BT}}{\mathit{BT}(#1)}}\xspace}
	\newtheorem{proposition}{Proposition}%
	\newtheorem{example}{Example}
	\newtheorem{remark}{Remark}
\DeclareMathOperator*{\argmin}{arg\,min}
	\newlength{\wordlength}
\newcommand{\nbh}[1][]{
	\ifthenelse{\equal{#1}{}}{\nu}{\nu(#1)}
}
\newcommand{\cstr}[1][]{
	\ifthenelse{\equal{#1}{}}{\mathscr S}{\cstr(#1)}
}
\newcommand{\choice}[1][]{
	\ifthenelse{\equal{#1}{}}{\mathit{C}}{\choice(#1)}
}
\newcommand{\nimrodout}[1]{}
\begin{document}

\title{Proportionally Representative Participatory Budgeting:\\ Axioms and Algorithms}

	\author{Haris Aziz}
	 \ead{haris.aziz@data61.csiro.au}
      \address{Data61, CSIRO and UNSW Australia\\
Sydney, Australia}

	\author{ Barton Lee} \ead{barton.e.lee@gmail.com}

	\address{Data61, CSIRO and the School of Economics, UNSW,\\ Sydney, Australia}
	
	\author{Nimrod Talmon} \ead{nimrodtalmon77@gmail.com}

	\address{Weizmann Institute of Science, Israel}
	
%

\begin{abstract}
Participatory budgeting is one of the exciting developments in deliberative grassroots democracy.
We concentrate on approval elections and propose proportional representation axioms in participatory budgeting,
by generalizing relevant axioms for approval-based multi-winner elections.
We observe a rich landscape with respect to the computational complexity of identifying proportional budgets and computing such,
and present budgeting methods that satisfy these axioms by identifying budgets that are representative to the demands of vast segments of the voters. 
\end{abstract}

\maketitle

\section{Introduction}\label{section:introduction}

\begin{quote}
``Participatory budgeting (PB) has become a central topic of discussion and significant field of innovation for those involved in democracy and local development''--- ~Cabannes~\cite{cabannes2004participatory}.
\end{quote}

\begin{quote}
``Whatever the best approach to participatory budgeting is, now is the time to identify it, before various heuristics become hopelessly ingrained''---~\citet{benade2017preference}.
\end{quote}

Participatory budgeting (PB),
used in hundreds of cities across several continents (especially South America)\footnote{PB is used in several cities in the USA~\cite{Gilm12a}. Paris is organizing one of the largest citywide participatory budgets (\url{https://budgetparticipatif.paris.fr/bp}).},
 is a grass-root deliberative approach where common people make public budgeting decisions.
One of its primary advantages is ``the more transparent management and more accessible municipal process that it allows''~\cite{cabannes2004participatory}.
PB is very successful, however most methods currently used do not take into account a formal approach to proportionality. The lack of sufficient representation of key groups
in participatory budgeting can be a critical shortcoming as has been witnessed in a participatory budgeting program in Porto Alegre, Brazil~\cite{BRTK03a}. We undertake a formal approach to representative participatory budgeting in which we propose both representation axioms and rules for achieving representative budgets.
As it has been observed that the exact way PB is implemented is critical to its success 
such a principled axiomatic and rule-based approach may especially be useful. The problem of finding the right approach to solve PB has been mentioned as a ``\textit{grand challenge for computational social choice, especially at a point in the field's evolution where it is gaining real-world relevance by helping people make decisions in practice}''~\citep{benade2017preference}.

Formally, we view participatory budgeting as a generalization of \textit{multi-winner elections},
specifically since in multi-winner elections usually we aim at selecting $k$ candidates out of a given set of $m$ candidates,
while in participatory budgeting each of the given candidates (usually referred to as projects or items) does not come at unit cost,
but has its own cost. The task is then
to find a satisfactory set of projects (notice, not specifically $k$ of them) whose total cost does not go beyond some specified budget limit $L$. The task is achieved while taking into account the preferences expressed by the electorate.

Most methods (see Section~\ref{section:related work}) of participatory budgeting do not take into account aspects of proportionality;
for example, one popular method uses $k$-Approval, by letting each voter specify a set of $k$ projects of her liking,
then order the projects by decreasing number of (sum of) approvals,
and greedily go over the list, taking (as a ``committee member'') each project which does not cause the proposed expenditure to go over the given limit. 

In certain situations,
however,
it is desirable to find budgets which are more proportional.
For example, if deciding on building, say, schools or recreational parks in a city,
the number of schools per neighborhood shall be roughly proportional to the population;
specifically, building schools only in the city center--where many residents live--is usually not an accepted outcome.

Luckily,
in the study of multi-winner election (i.e., in literature on committee selection) there are quite a number of ideas on how to achieve such proportionality;
specifically, several axioms of proportionality have been devised,
together with several voting rules satisfying these axioms.
Our approach for formalizing and achieving proportionality in PB is thus
to explore ways of generalizing the axioms formulating proportionality in multi-winner elections;
in effect,
``lifting'' the knowledge from committee voting / multi-winner voting to participatory budgeting (where the latter generalizes the former, by considering various costs for the candidates).

\paragraph{Contributions}
We initiate a formal axiomatic approach to participatory budgeting in which each project is either selected or not selected,
each voter approves a subset of the available projects, and the goal is to identify a proportionally representative budget. 

We present a series of proportional representation axioms which generalize the \textit{JR (justified representation)} and \textit{PJR (proportional justified representation)} axioms considered in approval-based multi-winner voting~\cite{aziz2017justified,sanchez2017proportional} . Our quest for the `right' axioms 
is motivated by the following main factors: (1) they should have normative justification in representing large-enough and cohesive groups of voters; (2) they should be strong enough to identify particularly representative outcomes but not too strong so that for some instances no satisfying outcomes exist; and (3) they should be computationally amenable, ideally admitting a polynomial-time algorithm. 

Our study includes a detailed examination of promising proportional representation axioms, computational results on finding and testing representative budgets, as well as identifying the logical relations between the axioms. In practical terms, we propose several algorithms that find suitable budgets with axiomatic guarantees of proportionality. One of the most compelling algorithms we present is a careful generalization of Phragmen's unordered rule (also referred to as Phragmen's sequential rule~\cite{BFJL16a,Jans16a}).
Most of our results are summarized in Table~\ref{table:summary}.

\subsection{Related Work}\label{section:related work}

Participatory budgeting~\cite{cabannes2004participatory} is concerned with letting citizens decide upon the way their collective funds are being used.
We discuss some methods by which this is done.
Knapsack voting, which is similar to $k$-Approval, but where each voter approves a set of projects whose total cost does not go over the given limit,
is considered by Goel et al.~\cite{goel2015knapsack}.
In an earlier paper, Klamler et al. \cite{KPR12a} presented algorithms for committee selection with knapsack constraints.
Both these works have utilitarian concerns and do not capture proportional representation. 
Further preference elicitation methods are studied by Benade et al.~\cite{benade2017preference},
that allow voters to either rank the candidates by their value (\emph{Value voting}) or value-for-money (\emph{Value-for-money voting}),
and so-called \emph{Threshold voting}, where each voter specifies the subset of projects whose value is perceived to be above a predefined threshold.
Shapiro and Talmon~\cite{condorcetbudgeting} generalize Condorcet's principle to PB
and devise a polynomial-time budgeting method to compute such budgets; these budgets are majoritarian in nature,
while here we are concerned with budgets satisfying proportional representation.

Fain et al.~\cite{fain2016core} study proportional representation in participatory budgeting,
thus their work is closely related to ours.
In their model, however, the task is to decide the amount of funds to spend on each project,
while our model is more discrete, as we aim at deciding which projects to fund.
Indeed, some projects are inherently indivisible or discrete in nature. For example, one can decide to fund one road or two roads, say, but not a road an a half.
\citet{ABM17a} discuss how a probabilistic approach to voting can be used to address PB. Just like the work of \citet{fain2016core}, the approach applies to projects that are `divisible' in nature. 
Another related paper is by Conitzer et al.~\cite{conitzer2017fair},
who consider proportionality issues in fair division problems.
There, however, proportionality is considered at the level of individuals and cohesive groups of voters are not considered. 

The study of multi-winner elections~\cite{chaptermw} is a thriving sub-field of computational social choice~\cite{moulin2016handbook},
and is concerned mainly with aggregation method for committee selection in which each candidate has a unit cost.
Some papers exist which study election scenarios with different costs for different candidates,
such as the paper of Lu and Butilier~\cite{lu2011budgeted}, who consider a generalization of the Chamberlin--Courant rule~\cite{chamberlincourant}.
%
There are quite a number of papers within the sub-field of multi-winner elections which concentrate on issues of proportionality,
some of which are mentioned next.
Aziz et al.~\cite{aziz2017justified} consider proportionality axioms for approval-based multi-winner elections. 
Here, we generalize this line of work to PB.
There are several follow up papers on approval-based multi-winner~(see e.g., \citep{BFJL16a,AzHu16a,sanchez2017proportional}).
The paper of Aziz et al.~\cite{aziz2017condorcet} considers multi-winner rules extending (in certain ways) Condorcet's principle,
and demonstrates that some of them satisfy certain axiomatic properties aiming at proportionality.

\section{Preliminaries}


Let $C$ be a set of items and let $w: C \rightarrow \mathbb{R}$, $c \mapsto w(c)$, be the associated \emph{cost function}.
We normalize the cost function such that $\min_{c\in C} w(c)=1$. This assumption is without loss of generality and assists in keeping our axioms invariant to scalings of the currency. Abusing notation slightly, given a subset of items $C'\subseteq C$ we define $w(C')=\sum_{c\in C'} w(c)$.
A \emph{budget limit} will be denoted by $L$; a \emph{budget} $W$ is said to be \emph{feasible} if $w(W)\le L$.

Let $V$ be a set of voters, where each voter $i\in V$ submits an approval ballot $A_i\subseteq C$,
which is an unranked ballot of items in $C$ which they approve of.
The vector of approval ballots, called a \emph{ballot profile}, is denoted by $A=(A_1, \ldots, A_n)$.
A set of voters $V'\subseteq V$ is said to be cohesive if 
$\cap_{i\in V'} A_i\neq \emptyset$,
that is, if they unanimously agree/support some item(s) $c\in \cap_{i\in V'} A_i$.

The goal of a \emph{budgeting method} is to take as an input a ballot profile $A$ and produce an output/budget $W\subseteq C$ which is feasible and satisfies some desirable axioms.
First, we would like our budgets to not `leave money on the table';
we formalize it as follows.

\begin{definition}[Exhaustiveness]
Given a budget limit $L$ and budget $W$, the budget is said to be \emph{exhaustive} if for all $c\notin W$ it holds that
$$w(W\cup\{c\})=w(W)+w(c)>L.$$
If this is not the case, the we say that the budget $W$ is \emph{non-exhaustive}.
\end{definition}

The second class of axioms,
discussed in Section~\ref{section:axioms},
relate to justified representation in the budget outcome $W$ with respect to the ballot profile $A$.
%
%
%
Next, we point out how the setting we consider generalizes approval-based committee voting.

\paragraph{Committee Voting / Multi-winner Voting}

In approval-based multi-winner election we have a set of candidates $C$ and a set of voters $V$,
where each voter $V$ corresponds to a subset of $C$, consisting of her approved candidates.
The task is to select a committee $S \subseteq C$ of $k$ candidates. Observe that, if $L=k$ and $w(c)=1$ for each $c\in C$, then our setting coincides with approval-based committee voting / multi-winner voting.

Several axioms of representation for multi-winner elections are known from the literature;
here we recall two of them.

\begin{definition}[JR~\citep{aziz2017justified}]
A committee $S$ satisfies \emph{JR} if there exists no set of voters $V'\subseteq V$ with
$|V'| \geq n / k$, such that $|(\cap_{i \in V'} A_i)| \geq 1$ and $|((\cup_{i \in V'} A_i) \cap S)| = 0$. 
\end{definition}

\begin{definition}[PJR~\citep{sanchez2017proportional}]
A committee $S$ satisfies \emph{PJR} if for all $\ell\in [1, k]$ there does not exists a set of voters $V'\subseteq V$ with $|V'|\ge \ell \, n/k$
such that
$|\big(\cap_{i\in V'} A_i\big)| \ge \ell$ but 
$|\big( (\cup_{i\in V'} A_i)\cap S\big)| < \ell$.
\end{definition}

The intuition for JR is that a group of at least $n / k$ voters which agree on at least one candidate
shall not be completely ignored when forming the committee.
PJR generalizes JR by considering groups of $\ell n / k$ which agree on at least $\ell$ candidates
and requires such groups to be represented appropriately.

\section{Proportionality Axioms for PB}\label{section:axioms}

In this main section we present a series of proportionality axioms that are inspired by justified representation axioms in approval-based multi-winner voting.
The relations between these axioms are discussed in Section~\ref{section:relations} and are pictorially represented in Figure~\ref{fig:relations}.
For each axiom,
we study whether a budget satisfying it is guaranteed to exist,
the complexity of testing whether a specific budget satisfies it,
and the complexity of computing a budget satisfying it.

Informally, 
the general principle of our generalization of \emph{justified representation} (JR) to PB
is that a cohesive group of size $\ge n/L$ should control at least one unit of the budget
while our generalization of PJR require that,
for every $\ell \in [1, L]$,
a cohesive group of size $\ge \ell \cdot n/L$ should control at least $\ell$ units of the budget.

\begin{table*}[t]
\centering
\scalebox{0.65}{
\begin{tabular}{c | c c c | c c | c c c}
  \toprule
  & \multicolumn{3}{c|}{Existence guaranteed?} & \multicolumn{2}{c|}{Computational complexity of testing} & \multicolumn{3}{c}{Computational complexity of computing} \\
  & W & W+EX & L & W & L & W & W+EX & L \\
  \midrule	
  BJR &
    yes & yes & yes &
    P (Prop.~\ref{testing bjr}) & P (Prop.~\ref{testing bjr}) & 
    P ($W = \emptyset$) & 
    P (Prop.~\ref{proposition bjrl p}) &
    P (Prop.~\ref{proposition bjrl p}) \\
  Strong-BJR &
    yes & no & no &
    P (Prop.~\ref{testing bjr}) & P (Prop.~\ref{testing bjr}) &
  	P ($W = \emptyset$) & 
	NP-h (Prop.~\ref{set cover}) & 
	NP-h (Prop.~\ref{set cover}) \\ \midrule
  Local-BPJR &
    yes & yes & yes &
    Co-NP-c & Co-NP-c & 
	P ($W = \emptyset$) & 
	P (Prop.~\ref{prop:algPhragmenimpliesLocal BPJRL}; GPseq) &
	P (Prop.~\ref{prop:algPhragmenimpliesLocal BPJRL}; GPseq) \\
  BPJR &
    yes & yes & yes &
    Co-NP-c & Co-NP-c & 
    P ($W = \emptyset$) & 
    Open &
    NP-h (Prop.~\ref{BPJR-L NP-hard}) \\
  Strong-BPJR &
    yes & no & no &
    Co-NP-c & Co-NP-c & 
  	P ($W = \emptyset$) & 
	NP-h (Prop.~\ref{set cover}) & 
	NP-h (Prop.~\ref{set cover}) \\
  \bottomrule
\end{tabular}}
\caption{Summary of our results.
For each proportionality axiom $\mathcal{PA}$ we state
(1) whether a budget satisfying $\mathcal{PA}$ is guaranteed to exist;
(2) what is the computational complexity of testing whether a specific budget satisfies $\mathcal{PA}$;
and
(3) what is the computational complexity of computing a budget satisfying $\mathcal{PA}$.
(NP-h denotes NP-hardness, Co-NP-c denotes Co-NP-completeness, EX denotes exhaustiveness,
and GPseq denotes that our generalization of Phragmen's sequential rule satisfies the axiom.
The co-NP-c results follow from the fact that testing PJR is co-NP-complete in approval-based multi-winner voting~\citep{AzHu16a}.)}
\label{table:summary}
\end{table*}

\subsection{Strong-BJR and Strong-BPJR}

Our first formal extensions of JR and PJR to PB,
termed Strong-BJR and Strong-BPJR,
are given next.

\begin{definition}[Strong-BJR-L]
For a budget limit $L$, a budget $W$ satisfies \emph{Strong-BJR-L} if there exists no set of voters $V'\subseteq V$ with
$|V'| \geq n / L$, such that $w\big(\cap_{i \in V'} A_i\big) \geq 1$ but $w\big((\cup_{i \in V'} A_i) \cap W\big) =0$. 
\end{definition}

\begin{definition}[Strong-BPJR-L]
For a budget limit $L$, a budget $W$ satisfies \emph{Strong-BPJR-L} if for all $\ell\in [1, L]$ 
there does not exist a set of voters $V'\subseteq V$ with $|V'|\ge \ell \, n/L$, such that $w\big(\cap_{i\in V'} A_i\big)\ge \ell$ but 
$w\big( (\cup_{i\in V'} A_i)\cap W\big)<\ell$.
\end{definition}

The definitions above capture the idea that cohesive groups of at least $n/L$ voters should control (at least) one unit of budget. 
The Strong-BJR-L definition is `extreme' in the sense that a group of $n/L$ voters and a group of $2 \cdot n/L$ voters are treated the same;
that is, they are only guaranteed a non-zero amount of budget spent on representing them.
Put differently,
if voters simply desire some item from their approval ballot $A_i$ to be included in $W$ but do not care about the amount of budget spent on it,
then Strong-BJR-L might be sufficient - however, a more natural approach is to scale the amount of budget spent according to the size of the voter groups.
Indeed,
Strong-BPJR-L captures this more natural idea that a group of $n/L$ voters should not be treated the same as a group of $2 \cdot n/L$ voters -
the larger group should control (at least) twice as much budget than the smaller group;
in other words, the amount of budget controlled by a group of cohesive voters is proportional to their size.

\begin{remark}
Strong-BJR-L (Strong-BPJR-L) indeed generalize the definitions of JR (PJR) for multi-winner voting:
  Strong-BJR-L (Strong-BPJR-L) collapses to JR (PJR) when all items cost $w_i \equiv 1$.
\end{remark}

%
%

Strong-BJR-L and Strong-BPJR-L are appealing axioms in terms of their proportionality requiremtn.
Unfortunately, even Strong-BJR-L budgets are not guaranteed to exist
(notice that Strong-BPJR-L implies Strong-BJR-L; to see this, take $\ell = 1$).

\begin{example}
Consider a PB scenario with items $C=\{c_1, c_2, c_3\}$, costs $w(c_1)=w(c_2)=2$ and $w(c_3)=1$,
budget limit $L = 3$, and voters $V=\{1, \ldots, 4\}$ with the following ballots.
\begin{align*}
  A_1=A_2&=\{c_1\} \\
  A_3=A_4&=\{c_2\}.
\end{align*}

Then,
to satisfy Strong-PJR-L,
the group of voters $V' = \{A_1, A_2\}$, being of size $2 \geq n/L = 4/3$, deserves at least one unit of budget; similarly for the group of voters $V'' = \{A_3, A_4\}$.
Then, to satisfy the group $V_1$ ($V_2$) we shall include $c_1$ ($c_2$) in the budget.
The problem is, however, that $w(c_1) + w(c_2) = 4 > L = 3$; in other words, with the given budget limit, we cannot afford it.
This means that no budget satisfying Strong-BJR-L exists (and thus also Strong-BPJR-L). 
\end{example}

The key issue illustrated in the example above is that each set of cohesive voters,
namely $V'=\{A_1, A_2\}$ and $V''=\{A_3, A_4\}$,
supports only expensive items relative to their size: $V'$ ($V''$) supports $c_1$ ($c_2)$ whose cost is $2$ ($2$).
Intuitively,
since both groups constitute exactly $1/2$ of the electorate but at the same time each `demands' an item of cost $2$,
no feasible budget can satisfy the `demands' of both groups.
In Section~\ref{section:axioms not strong} we describe weaker versions of Strong-BJR-L and Strong-BPJR-L which do not suffer from this issue.

We end this section by considering the complexity of computing Strong-BJR-L and Strong-BPJR-L budgets.

\begin{proposition}\label{set cover}
  Computing Strong-BJR-L and Strong-BPJR-L budgets is NP-hard.
\end{proposition}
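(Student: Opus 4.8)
The plan is to reduce from the NP-hard problem \textsc{Exact Cover by 3-Sets} (X3C): given a universe $U$ with $|U|=3q$ and a family $\mathcal{S}$ of $3$-element subsets of $U$, decide whether some $\mathcal{S}^{\ast}\subseteq\mathcal{S}$ of size $q$ partitions $U$. We may assume every element lies in at least one set of $\mathcal{S}$ (otherwise it is trivially a no-instance). Since each set has size $3$, a subfamily of $\mathcal{S}$ of size at most $q$ covers $U$ if and only if it is an exact cover; hence X3C is equivalent to asking whether a set cover of size $\le q$ exists.

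Given such an instance, I would construct the PB instance with one item $c_{S}$ of cost $3$ for each $S\in\mathcal{S}$, plus a dummy item $d$ of cost $1$ that no voter approves (present only to respect the normalization $\min_{c}w(c)=1$); budget limit $L=3q$; and, for each element $u\in U$, a single voter whose ballot is $\{c_{S}:u\in S\}$. So $n=3q=L$ and the Strong-BJR-L threshold $n/L$ equals $1$. Two immediate facts drive the reduction. First, since $w(c_{S})=3$, any feasible budget contains at most $q$ of the items $c_{S}$ (and including $d$ only wastes budget), so the set-items of a feasible budget form a subfamily of $\mathcal{S}$ of size $\le q$. Second, each single voter is a cohesive group meeting the threshold, so Strong-BJR-L forces, for every $u$, some approved item $c_{S}$ with $u\in S$ to be bought; equivalently, the bought set-items must cover $U$. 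Combining the two, a \emph{feasible} budget satisfies Strong-BJR-L iff its set-items cover $U$ iff they form an exact cover, so a Strong-BJR-L budget exists iff the X3C instance is a yes-instance, and computing a Strong-BJR-L budget (or certifying that none exists) would decide X3C.

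For Strong-BPJR-L I would reuse the same instance; the technical heart is to check that the exact-cover budget $W=\{c_{S}:S\in\mathcal{S}^{\ast}\}$ is not excluded by any level $\ell\ge 2$. Here the level-$\ell$ threshold is $\ell$, so a group activating level $\ell$ must contain at least $\ell$ voters, i.e.\ span at least $\ell$ distinct elements $u_{i_{1}},\dots,u_{i_{\ell_{0}}}$ with $\ell_{0}\ge\ell$; its ballot intersection is exactly $\{c_{S}:S\supseteq\{u_{i_{1}},\dots,u_{i_{\ell_{0}}}\}\}$, which is nonempty only when some $3$-set contains all $\ell_{0}$ spanned elements, forcing $\ell_{0}\le 3$ and hence $\ell\le 3$. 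For $\ell\in\{1,2,3\}$ the axiom's demand $w\big((\cup_{i\in V'}A_{i})\cap W\big)\ge\ell$ becomes $3\cdot|W\cap\{c_{S}:S\cap\{u_{i_{1}},\dots,u_{i_{\ell_{0}}}\}\neq\emptyset\}|\ge\ell$, i.e.\ just ``$W$ buys at least one set touching one of the spanned elements'', because a single bought $c_{S}$ already contributes cost $3$. Since $W$ covers all of $U$ this is satisfied, so $W$ meets Strong-BPJR-L; and as Strong-BPJR-L implies Strong-BJR-L, when X3C is a no-instance no Strong-BPJR-L budget exists either. Thus the one reduction settles both claims.

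I expect the main obstacle to be precisely that verification: a priori a level-$\ell$ constraint with $\ell\ge 2$ could demand two distinct bought items from a group whose spanned elements happen to lie in several common $3$-sets, which would make a genuine exact cover fail the axiom. What rescues the construction is that sets have size $3$, items cost $3$, and the only activatable levels are $1,2,3$ — so $|S|$, $w(c_{S})$, and the range of $\ell$ all coincide and every such constraint collapses to the plain Strong-BJR-L requirement. Making these parameters line up (rather than reducing from general \textsc{Set Cover}, where arbitrarily large sets would reopen this gap) is the delicate point; the rest is bookkeeping.
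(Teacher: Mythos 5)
Your proof is correct and follows essentially the same route as the paper's: a reduction from X3C with one cost-$3$ item per $3$-set, one voter per element, and budget limit equal to the number of elements, so that the Strong-BJR-L threshold is a single voter and feasibility caps the selection at $q$ sets, forcing an exact cover. Your dummy unit-cost item (to respect the normalization $\min_{c}w(c)=1$) and your explicit argument that only levels $\ell\le 3$ of Strong-BPJR-L can be activated are careful refinements of details the paper treats more briefly (the paper instead bounds the cost covering any $\ell$ voters by $3\lceil \ell/3\rceil\ge\ell$), but the construction and both directions of the equivalence are the same.
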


\begin{proof}
We reduce from the NP-hard Restricted-X3C problem~\cite{GJ79} in which,
given sets $S_1, \ldots, S_{3m}$ over a universe $x_1, \ldots, x_{3m}$,
where each set $S_j$ contains $3$ elements and each element is contained in $3$ sets,
the task is to exactly cover the universe with $m$ sets.

Given an instance of Restricted-X3C we construct a PB scenario as follows:
  For each set $S_j$ ($j \in [3m]$) we construct an item $c_j$ of cost $3$;
  For each element $x_i$ ($i \in [3m]$) we have a voter $v_i$ approving all items $c_j$ which correspond to sets $S_j$ which contain $x_i$
  (i.e., $v_i = \{c_j \in C : x_i \in S_j\}$).
  We set the budget limit $L$ to $3m$.
This finishes the construction.

Given $m$ sets which cover the universe,
we select exactly those (each cost $3$ so we respect the budget limit).
Then, each voter gets some representative, and Strong-BJR-L is satisfied.
Strong-BPJR-L is also satisfied, as each group of $\ell n / L = \ell$ voters
is covered by at least $\ceil{\ell / 3}$ items, of total cost at least $\ell$.

For the other direction, notice that, since $L = n$, it follows that even a Strong-BJR-L budget shall make sure that each voter gets represented by at least one item.
Recalling the budget limit, it means that a Strong-BJR-L budget shall correspond to an exact cover.
\end{proof}

\subsection{BJR and BPJR}\label{section:axioms not strong}

To remedy the possible non-existence of Strong-BJR-L and Strong-BPJR-L budgets described above,
below we introduce a weakening of these axioms which address this issue;
informally speaking, this is done by requiring each cohesive group of voters to (unanimously) support some item which is sufficiently cheap.  

\begin{definition}[BJR-L]
A budget $W$ satisfies \emph{BJR-L} if there exists no set of voters $V'\subseteq V$ with
$|V'|\ge n/{L}$ such that $w\big(\cap_{i\in V'} A_i\big)\ge 1$,
$w\big((\cup_{i\in V'} A_i) \cap W\big) = 0$,
and there exists some $c\in \cap_{i\in V'} A_i$ with $w(c)=1$.
\end{definition}

\begin{remark}\label{remark:weak}
The additional requirement for cohesive groups of voters in the BJR definition is very restrictive.
For example, if there is only one item of cost one, then any budget $W$ containing this item will satisfy JR.
Thus,
the definition of BJR is only meaningful when there are multiple items of cost one.
This observation suggests that the JR budget definition is useful as a `minimal' requirement of representation
but
leaves a lot to be desired in terms of representation. 
\end{remark}

In a similar manner to the definition of BJR-L,
below we introduce the axiom of BPJR-L.
Here,
however,
our requirement allows for greater flexibility in the additional requirement of cohesive voters supporting a `sufficiently cheap' item.
In particular, we allow for bundles of items and let the `sufficiently cheap' criteria grow proportionally with the size of the voters group. 

\begin{definition}[BPJR-L]
A budget $W$ satisfies \emph{BPJR-L} if for all $\ell\in [1, {L}]$
there exists no set of voters $V'\subseteq V$ with $|V'|\ge \ell \, n/{L}$ such that $w\big(\cap_{i\in V'} A_i\big)\ge \ell$
and
$w\big( (\cup_{i\in V'} A_i)\cap W\big)< \max\Big\{w(C')\, :\, C'\subseteq \cap_{i\in V'}A_i \, \text{ and }\, w(C')\le |{V'|L/n}\Big\}$.
\end{definition}

The idea of the definition above is that every group of (at least) $\ell \cdot n/L$ voters should control (at least) $\ell$ units of the budget.
Technically,
however,
for this to be satisfied we need a bundle of items with `low enough' cost $\le \ell$.
We then require that at least the cost of the most expensive bundle which is also `low enough' is used in representing this group of $\ell \cdot n/L$ voters.
Note that indeed BPJR-L generalizes BJR-L; specifically, BPJR-L boils down to BJR-L when $\ell = 1$.

It turns out that exhaustive BPJR-L budgets always exist.
(While the procedure described in the next proof is super-polynomial,
afterwards we show that this is not a coincidence,
by showing the problem of computing a budget satisfying BPJR-L is NP-hard.)

\begin{proposition}\label{proposition:exhaustive pjr b always exist}
  For any given budget limit $L$ there always exists a feasible budget $W$ which is exhaustive and satisfies BPJR-L.
\end{proposition}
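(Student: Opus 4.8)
The plan is to exhibit a greedy-type procedure that builds up an exhaustive feasible budget while never creating a ``violated'' cohesive group, and then argue that at termination no such group can exist. Concretely, I would maintain a current budget $W$, initially empty, and repeatedly look for a ``deserving but under-represented'' cohesive group: a set $V'$ and a parameter $\ell \in [1,L]$ witnessing a failure of BPJR-L with respect to the current $W$, \emph{together with} a witnessing bundle $C' \subseteq \cap_{i\in V'}A_i$ attaining the maximum in the definition (so $w(C')\le |V'|L/n$ and $w(C')$ exceeds the current $w((\cup_{i\in V'}A_i)\cap W)$). Whenever such a group exists, I add to $W$ the items of $C'\setminus W$ (or as many as needed so that the union-with-$W$ cost for that group reaches $w(C')$; adding the whole bundle $C'$ is cleanest). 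I keep doing this until no violated group remains, and finally ``top up'' $W$ greedily with any affordable remaining items to make it exhaustive.

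The first key step is to check \emph{feasibility is preserved}. When I add a bundle $C'$ for a group $V'$ of size $\ge \ell n/L$ with $w(C')\le \ell$, I should charge this cost to the $|V'|$ voters in $V'$; a standard accounting argument (each voter pays at most a $1/n$-fraction of $L$ overall, summed over the rounds in which a bundle they ``newly benefit from'' is purchased) should show $w(W)\le L$ throughout. More carefully: I would set up a potential/charging scheme where each newly purchased unit of cost is paid by voters who previously had strictly less represented cost and now have strictly more; since a group of size $|V'|$ can be charged at most $|V'|L/n$ total, and the groups of a fixed voter are nested in represented-cost, the total never exceeds $L$. The second key step is \emph{termination}: every iteration strictly increases $\sum_{i\in V} w((\cup_{j: i\in V'}\dots))$—more simply, it strictly increases $w(W)$ by at least the minimum item cost $1$—so after at most $L$ rounds the process stops; since costs may be non-integral this bound is really $w(W)$ strictly increasing by a positive amount bounded below, giving termination (and explaining the ``super-polynomial'' caveat, since identifying a violated group is the expensive part). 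Finally, \emph{correctness}: when the loop halts, by construction no $(V',\ell)$ witnesses a BPJR-L violation for the current $W$, so $W$ satisfies BPJR-L; the top-up phase only adds items, which can never create a new violation (adding items to $W$ only increases $w((\cup_{i\in V'}A_i)\cap W)$ for every group), and guarantees exhaustiveness.

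The main obstacle I expect is the feasibility/charging argument: I need to be sure that satisfying successive cohesive groups does not overspend the budget $L$. The subtlety is that the groups are arbitrary, overlapping sets of voters, and the ``cheap bundle'' purchased for one group may already be partly paid for by an earlier purchase for an overlapping group. The clean way around this is to always purchase \emph{exactly} a maximal cheap bundle $C'$ for the chosen group and charge the incremental cost only to voters in $V'$ whose represented cost strictly increased, exploiting that for a fixed voter the sequence of ``represented cost'' values is monotone and each group $V'$ containing that voter satisfies $|V'|\ge (\text{its }\ell)\, n/n$-scaled bound; then a global double-counting over voters caps total spending at $n\cdot(L/n)=L$. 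Making this accounting fully rigorous—in particular handling the case where an added bundle overshoots what a group strictly needs—is where the real work lies; choosing to always add the whole maximal cheap bundle, and choosing at each step a violated group with \emph{smallest} current represented cost, should keep the bookkeeping tractable.
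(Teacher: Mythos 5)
Your high-level plan (repeatedly serve a violated group by buying its maximal cheap bundle, then top up) founders exactly where you anticipate: the feasibility/charging argument is not merely ``the real work''---it is false for the procedure as you describe it. The difficulty is that a small group's maximal cheap bundle need not be contained in (or even intersect) the maximal cheap bundle of a larger group containing it, so a fixed voter can legitimately be charged in several rounds and the charges do not telescope. Concretely, take $n=3$, $L=3$, items $c_1,c_2,c_3$ with $w(c_1)=w(c_3)=1$, $w(c_2)=2$, and ballots $A_1=\{c_1,c_2\}$, $A_2=\{c_2\}$, $A_3=\{c_3\}$. The group $\{1\}$ deserves, and its unique witnessing bundle is, $\{c_1\}$ (cost cap $|V'|L/n=1$); the group $\{1,2\}$ deserves $\{c_2\}$; the group $\{3\}$ deserves $\{c_3\}$. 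All groups start at represented cost $0$, so your ``smallest represented cost first'' rule does not stop the loop from serving $\{1\}$ first; it then buys $c_1$, must still later buy $c_2$ and $c_3$, and ends with $w(W)=4>L$. The only feasible BPJR-L budget here is $\{c_2,c_3\}$, so once $c_1$ is purchased no feasible completion exists, and no charging scheme can rescue an order-oblivious greedy of this form. (Voter $1$ is charged $1$ for $c_1$ and then another $1$ as her share of $c_2$, exceeding her quota $L/n=1$.)

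The paper's proof avoids this with two ingredients your proposal lacks. First, it processes deserved amounts $\ell'$ in \emph{decreasing} order, at each step buying a bundle of cost exactly $\ell'$ that is unanimously approved by the \emph{largest} number of still-unserved voters (requiring at least $\ell' n/L$ of them) and then removing those voters from further consideration; this guarantees that every unit of expenditure represents at least $n/L$ \emph{new} voters, and in the example it buys $c_2$ for $\{1,2\}$ before ever considering $\{c_1\}$ for voter $1$ alone. Second, it checks $w(W)+\ell'\le L$ \emph{before} buying, so feasibility holds by construction; correctness is then established not by ``no violated group remains when the loop halts'' but by a counting argument: since each spent unit covers at least $n/L$ fresh voters, any group left under-represented because the budget ran out has size strictly below $\ell n/L$ and therefore cannot witness a BPJR-L violation. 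Your argument would need both of these ideas (or substitutes for them) to go through.
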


\begin{proof}
We describe an algorithm which produces feasible, exhaustive budgets, which satisfy BPJR-L.
The algorithm proceeds as follows.
We iterate over $\ell'$, where initially $\ell' = L$ and $\ell'$ can only decrease, until $\ell' = 1$.
Let $A' = A$ (initially, considering all voters; during the course of the algorithm, we will ``discard'' voters as we take care for them),
and let $W = \emptyset$ (initially, no item is budgeted).

In each iteration,
first check whether $w(W) + \ell' \leq L$;
if this is not the case, then decrease $\ell'$ by one and continue to the next iteration.
Otherwise,
let $$C^*:=\{C' \subseteq C\, :\, w(c) = \ell'\}.$$
If $C^* = \emptyset$, then decrease $\ell$ by one and continue to the next iteration.
Otherwise,
for each $C' \in C^*$, let
$$A(C') = \{ i \in A' \, : \, C' \in A_i \}$$
denote the voters from $A'$ which approve all items of $C'$.
Select any $C' \in C^*$ with maximal size of $|A(C')|$
and check whether $|A(C')| \geq \ell' \cdot n / L$;
If this is not the case, then decrease $\ell'$ by one and continue to the next iteration.
Otherwise,
set $W \mapsto W \cup C'$
and
redefine $A' \mapsto A' \setminus A(C')$.
Leave $\ell'$ as is and continue to the next iteration.
The algorithm halts whenever $\ell' = 0$,
in which case,
if $W$ is non-exhaustive, then we arbitrarily add items to it until it does,
while keeping it feasible.

Next we prove the algorithm's correctness.
Let $W$ be the output from the above algorithm and for the purpose of a contradiction suppose that BPJR-L is not satisfied. That is, there exists an $\ell\in [1, L]$ and a set of voters $V'$ with $|V'|\ge \ell n/L$ and $w(\cap_{i\in V'} A_i)\ge \ell$ such that there exists $C'\subseteq \cap_{i\in V'} A_i$ with $w(C')\le \ell$ such that 
$$w\big( (\cup_{i\in V'} A_i)\cap W\big)< W(C').$$
Furthermore, let $\ell$ be the smallest number such that the above holds. 

Now suppose that $w(W)=L$. Then every unit spent in the budget represents at least $n/L$ additional voters, and so the total number of adequately represented voters is at least
$$w(W) \cdot n/L=n,$$
thus, there cannot be a group of $|V'|$ voters which are not represented by at least $|V'|/(\frac{n}{L})\ge \ell$ units of budget; thus, a contradiction.

Now suppose that $w(W)<L$. Then, every unit spent in the budget represents at least $n/L$ additional voters, and so the total number of adequately represented voters is at least
$$w(W) \cdot n/L<n.$$
Denote this set of voters by $R\subseteq V$ and note that any subset $V''\subseteq R$ is adequately represented in the budget with at least $|V''|/(\frac{n}{L})$ units of expenditure. 
Noting that there exists a group $V'$ of size $|V'|\ge \ell n/L$ who are inadequately represented -- recalling that $\ell$ is the smallest number such that this holds --- it must be that the group was not represented due to the budget limit and thus
$$w(W)+\ell>L.$$
Now if there exists a (smallest) group $V'$ of inadequately represented voters it must be that 
$$V'\subseteq V\backslash R$$
and so 
\begin{align*}
|V'|&\le n-|R|\\
&\le n-\frac{w(W) n}{L}\\
&=\frac{n}{L}\Big(L-w(W)\Big)\\
&<\ell \frac{n}{L}&&\text{since $w(W)+\ell>L$,}\\
&\le |V'|,
\end{align*}
thus no such set $V'$ can exist and we have derived a contradiction. We conclude that BPJR-L must be satisfied.
\end{proof}


The next result explains why the algorithm presented in the proof above is not polynomial-time.

\begin{proposition}\label{BPJR-L NP-hard}
  Computing a BPJR-L budget is NP-hard.
\end{proposition}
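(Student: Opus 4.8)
The plan is to reduce from the NP-hard Restricted-X3C problem, exactly as in the proof of \propref{set cover}, and argue that the weakening of Strong-BPJR-L to BPJR-L does not help in the constructed instance. First I would recall the construction: given a Restricted-X3C instance with sets $S_1,\dots,S_{3m}$ over a universe $\{x_1,\dots,x_{3m}\}$, where each set has size $3$ and each element lies in exactly $3$ sets, I build a PB scenario with an item $c_j$ of cost $3$ for each set $S_j$, a voter $v_i$ approving exactly the items $\{c_j : x_i \in S_j\}$ for each element $x_i$, and budget limit $L = 3m = n$. Since every item costs $3 > 1$, the ``sufficiently cheap'' caveat that distinguishes BJR-L/BPJR-L from their strong counterparts will, I claim, be vacuous or automatically met by any feasible exhaustive budget here, so that on these instances BPJR-L coincides with Strong-BPJR-L.

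The key steps, in order: (1) verify that with $L = n$, for any group $V'$ and any integer $\ell$ with $|V'| \ge \ell n/L = \ell$, the threshold quantity $\max\{w(C') : C' \subseteq \cap_{i\in V'} A_i,\ w(C') \le |V'|L/n\}$ is at least $3\lfloor |V'|/3 \rfloor$ whenever $\cap_{i\in V'} A_i$ is nonempty, because items come in cost-$3$ chunks and a cohesive group sharing an item can always grab that one item of cost $3 \le |V'|$ once $|V'|\ge 3$; in particular taking $\ell = |V'| = 3$ forces any BPJR-L budget to spend $\ge 3$ units on items approved by any cohesive triple, and by the $n/L = 1$ accounting in the proof of \propref{proposition:exhaustive pjr b always exist}, forces each individual voter to be represented. (2) Conclude, just as in \propref{set cover}, that a feasible BPJR-L budget must represent every voter $v_i$ by at least one included item, and since each item has cost $3$ and $L = 3m$, the budget consists of exactly $m$ items forming an exact cover — the elements covered are precisely $\bigcup_j \{x_i : x_i \in S_j\}$ over the chosen $S_j$, which must be all $3m$ elements. (3) Conversely, given an exact cover, selecting the corresponding $m$ items yields a feasible budget; I then check it satisfies BPJR-L by the same counting argument as in the proofs above (every unit of the spent budget $w(W) = 3m = L$ ``pays for'' $n/L = 1$ voter, so all $n$ voters are adequately represented at the rate $\ell$ units per $\ell$ voters), and that it is already exhaustive since no further cost-$3$ item fits.

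The main obstacle I anticipate is step (1): carefully checking that the $\max\{\dots\}$ expression in the BPJR-L definition behaves on this instance so that the weakened axiom really does coincide with Strong-BPJR-L here — in particular handling the case $\ell=1$, where the threshold is $\max\{w(C') : C' \subseteq \cap A_i,\ w(C') \le |V'|/n\}$, and $|V'|/n$ may be less than $1$ so the only admissible $C'$ is $\emptyset$ with $w(\emptyset)=0$, making the $\ell=1$ constraint trivially satisfied. The point to nail down is that the binding constraint is instead at $\ell = 3$ with $|V'| = 3$ (a cohesive triple of voters corresponding to the three elements of some set $S_j$), where the threshold becomes exactly $3$ and forces $c_j$-type representation; iterating over all such triples then forces full coverage. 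Once this is established the two directions are routine, mirroring \propref{set cover} almost verbatim.
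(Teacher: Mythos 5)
Your reduction does not go through: the central claim in step (1) --- that on the constructed X3C instance BPJR-L coincides with Strong-BPJR-L --- is false, and the ``sufficiently cheap'' caveat cuts exactly the other way. Since every item costs $3$, for a singleton group $V'=\{v_i\}$ the threshold $\max\{w(C') : C'\subseteq \cap_{i\in V'}A_i,\ w(C')\le |V'|L/n=1\}$ is attained only by $C'=\emptyset$ and equals $0$, so singletons impose \emph{no} constraint under BPJR-L --- whereas under Strong-BJR-L the singletons (with $n/L=1$) are precisely what forces every voter to be individually represented in the paper's proof of \propref{set cover}. (Your appeal to ``the $n/L=1$ accounting'' in \propref{proposition:exhaustive pjr b always exist} does not help: that accounting concerns the output of a particular algorithm, not what the axiom forces of an arbitrary budget.) The only cohesive groups that impose anything are the triples $V'$ corresponding to the sets $S_j$ themselves (no four voters share an item because $|S_j|=3$), and for such a triple the requirement is $w\big((\cup_{i\in V'}A_i)\cap W\big)\ge 3$, which is already met if \emph{one} of the three voters approves some item of $W$. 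Hence a feasible BPJR-L budget in your instance is merely a collection of at most $m$ sets whose union of elements meets every $S_j$; this need not be an exact cover even when one exists, and since BPJR-L budgets always exist (\propref{proposition:exhaustive pjr b always exist}) such a budget also exists when the X3C instance is a no-instance. A hardness proof for \emph{computing} a BPJR-L budget must ensure that \emph{every} BPJR-L budget of the constructed instance certifies the answer; your instances fail this in both directions, so iterating over the triples does not ``force full coverage.''

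The paper's proof avoids this by reducing from Partition: a single voter approves items of costs $x_1,\dots,x_m$ (summing to $2B$) and the limit is $L=B$. The lone voter is the entire electorate, so the bundle-maximum in the BPJR-L definition forces any BPJR-L budget to spend exactly the largest subset sum not exceeding $B$, which equals $B$ if and only if Partition is a yes-instance; checking the cost of any computed BPJR-L budget therefore decides Partition. That argument exploits precisely the feature of BPJR-L --- the $\max$ over affordable bundles --- that defeats your X3C construction, where no nonempty bundle is affordable at the singleton level.
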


\begin{proof}
%
We describe a reduction from the NP-hard problem Partition~\cite{GJ79} which,
given integers $x_1, \ldots, x_m$ whose sum is $2B$,
asks to decide whether there is a subset of them which sums to $B$.

Given an instance of Partition with integers $x_1, \ldots, x_m$ of sum $2B$,
we construct a PB scenario for which BPJR-L budget exists if and only if the
there is a subset of those integers whose sum equals to $B$.
Specifically,
we construct items $c_1, \ldots, c_m$, where item $c_j$ ($j \in [m]$) costs $x_j$ and have $1$ voter approving all of them;
we set the budget limit $L$ to $B$. This finishes the construction.

Given a solution to the Partition instance, consisting of a subset $X' \subseteq X$ with $\sum_{x' \in X'} x' = B$,
we construct a budget $W = X'$ which costs $B = L$ and thus is exhaustive and satisfies BPJR-L.
For the other direction, notice that, according to BPJR-L, the single voter deserves the whole budget,
thus a budget satisfying BPJR-L shall correspond to a solution to the Partition instance.
\end{proof}

Computing BJR-L budgets, however, can be done in polynomial time.
The proof of the next proposition is by a greedy algorithm, somehow resembling Approval-based greedy Chamberlin--Courant.

\begin{proposition}\label{proposition bjrl p}
  There is a polynomial-time algorithm which computes an exhaustive budget satisfying BJR-L. 
\end{proposition}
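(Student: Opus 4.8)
\emph{Proof idea.}
The plan is a greedy procedure that buys \emph{only} unit-cost items --- an item of cost $1$ is the only kind whose omission can ever witness a BJR-L violation --- followed by a padding step to reach exhaustiveness. Maintain a budget $W$ (initially $\emptyset$) and a set $U\subseteq V$ of \emph{uncovered} voters (initially $V$), a voter becoming covered as soon as $W$ contains one of her approved items. Iterate as follows: if $w(W)+1>L$, stop; otherwise, among all $c\in C$ with $w(c)=1$ (if any) pick one maximizing $\bigl|\{i\in U:c\in A_i\}\bigr|$, and if this maximum is at least $n/L$, add $c$ to $W$, delete from $U$ every voter approving $c$, and repeat; if it is below $n/L$ (in particular if no unit-cost item exists), stop. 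Finally, while $W$ is non-exhaustive, add to $W$ any item that keeps $w(W)\le L$.

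First I would check polynomiality and feasibility. Each purchase costs exactly $1$ and is made only when $w(W)+1\le L$, so the budget constraint is maintained throughout; an item is bought at most once, so the loop runs at most $|C|$ times, each round doing $O(|C|\cdot n)$ work, and the padding step adds at most $|C|$ more items before terminating with a feasible exhaustive budget.

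Next I would verify BJR-L. Observe first that enlarging a feasible $W$ cannot create a BJR-L violation, since the violation clause requires $w\bigl((\cup_{i\in V'}A_i)\cap W\bigr)=0$ and this quantity is non-decreasing in $W$; hence it suffices that the pre-padding budget already satisfies BJR-L. Suppose not: there is $V'$ with $|V'|\ge n/L$, $w(\cap_{i\in V'}A_i)\ge 1$, $w\bigl((\cup_{i\in V'}A_i)\cap W\bigr)=0$, and some $c\in\cap_{i\in V'}A_i$ with $w(c)=1$. Since every item costs at least $1$, $w\bigl((\cup_{i\in V'}A_i)\cap W\bigr)=0$ forces $(\cup_{i\in V'}A_i)\cap W=\emptyset$, so no voter of $V'$ is covered and $c\notin W$; thus at termination $V'\subseteq U$. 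Now consider why the loop stopped. If it stopped because the best unit-cost item had fewer than $n/L$ uncovered approvers, this contradicts that $c$ is a unit-cost item with at least $|V'|\ge n/L$ uncovered approvers. If it stopped because $w(W)+1>L$, then since each of the $w(W)$ purchases removed at least $n/L$ (fresh) voters from $U$, at termination $|U|\le n-w(W)\cdot n/L<n/L$ (using $w(W)>L-1$), again contradicting $V'\subseteq U$ and $|V'|\ge n/L$. Either way we reach a contradiction, so the output satisfies BJR-L.

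The statement is not really hard; the one point I would be careful to spell out is exactly the two-case stopping analysis above together with its budget accounting --- namely that buying a unit-cost item whenever $w(W)+1\le L$ and some such item still reaches $n/L$ uncovered voters is enough, because the only two ways the loop can halt are ``no useful unit-cost item left'' (which a hypothetical witness $c$ would contradict) and ``almost out of budget'' (which leaves fewer than $n/L$ uncovered voters, too few to form a witnessing group). The rest is bookkeeping: the rule is essentially greedy Approval Chamberlin--Courant restricted to unit-cost candidates, the only twist being that the coverage target $n/L$ stays fixed, mirroring the BJR-L quantifier.
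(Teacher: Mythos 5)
Your proposal is correct and follows essentially the same route as the paper's proof: a greedy pass over unit-cost items ranked by their number of still-uncovered approvers, followed by arbitrary padding to exhaustiveness, with correctness resting on the same two-case stopping analysis (either no unit-cost item reaches $n/L$ uncovered approvers, contradicting a witness, or $w(W)>L-1$ so fewer than $n/L$ voters remain uncovered). The only differences are cosmetic --- the paper separately handles the case where all unit-cost items fit in the budget and runs the greedy until $w(W)\ge\lfloor L\rfloor$ rather than stopping once the best score drops below $n/L$ --- and neither affects the argument.
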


\begin{proof}
If $L=0$, 
then $W=\emptyset$ satisfies the proposition.
Let $L>0$ be a budget limit and define the set of cheapest items:
  $$C^*:=\{c\in C\, :\, w(c)=1\}.$$
If $|C^*|\le L$,
then any budget $W$ such that $C^*\subseteq W$ is feasible and satisfies BJR-W.
To satisfy the exhaustiveness property we add items $c\notin W$ to the budget until it is exhaustive.

If $|C^*|>L$,
then we continue according to the following procedure.
Let $A'=A$, $W=\emptyset$,
and let 
$s(c, A')=|\{i\in A'\, :\, c\in A_i\}|$
denote the approval score of item $c$ with respect to the ballot profile $A'$.
Select any $c\in C^*$ with maximal score $s(c, A')$;
then, set $W\mapsto W\cup\{c\}$,
remove all approval ballots with $c \in A_i$ from $A'$,
and redefine $C^*\mapsto C^*-\{c\}$.
Repeat this process until $w(W)\ge \floor{L}$ or until $C^*=\emptyset$. 

Note that at each stage where an item $c\in C^*$ is added to $W$,
a group of $s(c, A')$ unrepresented voters become represented.
Also note that at each stage $s(c, A')$ is weakly decreasing;
that is, we remove voters who have been represented and so the approval score of any item can never increase. 

In the first case, the algorithm terminates with $W$ such that $w(W)\ge \floor{L}$,
thus the exhaustiveness and feasibility properties are satisfied.
Now,
for the purpose of a contradiction,
suppose that BJR-W-L is not satisfied.
Thus, there exists an item $c\in C^*$ with $s(c, A')\ge n/L$,
for $A'$ at the algorithm's termination.
But since $s(c, A')$ is weakly decreasing at each stage and the item $\tilde{c}\in C^*$ with maximal score is added each time, but $c\in C^*$ was never elected, it must be that every one of the $|W|=w(W)$ items added to $W$ represented at least $|W| \cdot n/L$ distinct voters.

Notice, however, that as
\begin{align*} 
|W|\cdot n/L &\ge \floor{L}n/L
> (L-1) n/L =n-n/L,
\end{align*}

%
it would mean that strictly more than $n-n/L$ voters were represented -- meaning that there can no exists a set $V'$ of unrepresented voters with $|V'|\ge n/L$; this gives a contradiction. 

For the second case, note that there is no cheapest item supported by any voter - let alone a cohesive group of $\ge n/L$ voters.
Thus, the budget $W$ satisfies BJR-L.
To ensure that the exhaustiveness property is satisfied we (arbitrarily) add items $c\notin W$ until the budget is exhaustive (but still remains feasible). 
\end{proof}

We end this section by considering the complexity of testing whether a given budget satisfies BJR-L (and Strong-BJR-L).

\begin{proposition}\label{testing bjr}
  There is a polynomial-time algorithm to test whether a given budget satisfies BJR-L and Strong-BJR-L.
\end{proposition}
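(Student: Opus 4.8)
The plan is to reduce the existence of a violating voter group to a simple per-item check. First I would observe that, since every item has cost $w(c)\ge 1$, the condition $w\big((\cup_{i\in V'}A_i)\cap W\big)=0$ is equivalent to $(\cup_{i\in V'}A_i)\cap W=\emptyset$, i.e.\ \emph{no} voter in $V'$ approves \emph{any} item already in $W$ (equivalently, $A_i\cap W=\emptyset$ for every $i\in V'$); likewise $w\big(\cap_{i\in V'}A_i\big)\ge 1$ is equivalent to $\cap_{i\in V'}A_i\ne\emptyset$, and for BJR-L the extra requirement simply says that some item in this common intersection has cost exactly $1$. Hence a violation is always witnessed by a group that shares a single common item $c$, and (for BJR-L) we may take $w(c)=1$.

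Second, for each item $c\in C$ I would define
$$V_c:=\{\,i\in V\;:\;c\in A_i\ \text{and}\ A_i\cap W=\emptyset\,\}.$$
I claim $W$ violates Strong-BJR-L iff $|V_c|\ge n/L$ for some $c\in C$, and $W$ violates BJR-L iff $|V_c|\ge n/L$ for some $c\in C$ with $w(c)=1$. The forward direction is immediate: given a violating group $V'$, pick a common approved item $c$ (with $w(c)=1$ in the BJR-L case, using the extra clause); since each $i\in V'$ approves $c$ and, by the union condition, has $A_i\cap W=\emptyset$, we get $V'\subseteq V_c$ and so $|V_c|\ge|V'|\ge n/L$. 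The backward direction is just as direct: $V_c$ itself (or any $\ceil{n/L}$-subset of it) is cohesive through $c$, approves no item of $W$, and in the BJR-L case contains $c$ as a common item of cost $1$, so it witnesses a violation.

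The algorithm is therefore: compute $V_c$ for every $c\in C$ (respectively, only for those $c$ with $w(c)=1$ in the BJR-L test), report a violation exactly when some $|V_c|\ge n/L$, and report satisfaction otherwise; this runs in polynomial time (for each pair $c,i$ one checks $c\in A_i$ and $A_i\cap W=\emptyset$, so the total work is polynomial in $|C|$, $|V|$, and the ballot sizes), regardless of whether $L$ is fixed or part of the input. Correctness is exactly the equivalence established in the previous paragraph. I do not expect a substantive obstacle; the only point needing care is the opening reduction --- recognizing that the conditions phrased via the cost function $w$ collapse to plain (non-)emptiness of sets because $\min_{c} w(c)=1$, and that the distinguished common item can be forced to have cost $1$ precisely in the BJR-L variant but is unconstrained in the Strong-BJR-L variant.
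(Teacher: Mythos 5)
Your proposal is correct and follows essentially the same approach as the paper's own proof: for each item $c$ (restricted to items with $w(c)=1$ in the BJR-L case), collect the voters who approve $c$ but approve nothing in $W$, and reject exactly when some such set has size at least $n/L$. If anything, your write-up is more careful than the paper's, which attaches the cost-$1$ restriction to the wrong variant (it should apply to BJR-L, not Strong-BJR-L, exactly as you have it) and omits the explicit two-directional equivalence you spell out.
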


\begin{proof}
Given a PB scenario with items $C$, voters $V$, and budget limit $L$,
and a budget $W$, the task is to decide whether $W$ satisfies BJR-L or Strong-BJR-L.
We proceed by describing a procedure for BJR-L and mention how it shall be modified for Strong-BJR-L.

First, we find all voters which are not represented at all;
formally, let $V'' := \{v \in V : V \cap W = \emptyset\}$.
Next,
for each $c \in C$ (or, for Strong-BJR-L, for each $c \in \{c \in C : w(c) = 1\}$),
  consider the voters in $V''$ which approve $c$; formally, let $V''_C = \{v'' \in V'' : c \in v'\}$.
  Then, if $|V''_c| \geq n/L$,
  reject.
If reached the end, accept.
\end{proof}

\subsection{Local-BPJR}

As we are interested in efficient budgeting methods which output budgets satisfying certain forms of proportional representation,
the computational hardness result of the last section is somewhat disappointing,
as it presumably rules out the possibility of efficient methods which compute exhaustive BPJR-L budgets.
Here we consider weaker versions of these concepts and then we describe an efficient budgeting method which computes budgets satisfying it.
\begin{definition}[Local-BPJR-L]
A budget $W$ satisfies \emph{Local-BPJR-L} if for all $\ell\in [1, {L}]$ there exists no set of voters $V'\subseteq V$ such that $W'=(\cup_{i\in V'} A_i)\cap W$, $|V'|\ge \ell \, n/{L}$ and there exists some $W''\supset W'$ such that
$$W''\in \max\Big\{w(C')\, :\, C'\subseteq \cap_{i\in V'}A_i \, \text{ and }\, w(C')\le \ell\Big\}.$$
\end{definition}


\citet{Jans16a} reports on several interesting rules, developed by Phragmen, which were designed to achieve proportionality axioms in multi-winner voting. 
\citet{BFJL16a} proved that one of these rule, which they referred to as \emph{Phragmen's sequential rule}, computes a committee that satisfies PJR
(this refers to the proportionality axiom for multi-winner voting).
Here,
we generalize Phragmen's sequential rule~\citep{BFJL16a,Jans16a} to the case of PB.

\begin{algorithm}[t]
  \begin{algorithmic}
    \REQUIRE $(N, C, w, L)$ \COMMENT{resp.: voters, items, cost function, limit}
    \ENSURE $W$ \COMMENT{budget}
  \end{algorithmic}
  \begin{algorithmic}[1]
    \STATE $W\longleftarrow \emptyset$
    \WHILE{$C'' = \{c \notin W : w(W) + w(c) \leq L \land \exists i \in N : c \in A_i\} \neq \emptyset$}\label{line:three}
      \STATE Let $C^* = \{c' : c' \in \argmin_{c' \in C''} s_{c'}\}$ where: \COMMENT{argmin set}

\STATE \quad $x_{c,i} \geq 0$ \qquad ($ \forall c \in C$, $\forall i \in N$)
\STATE \quad $x_{c,i} = 0$ \qquad ($\forall c \in C$, $\forall i \in N$ such that $c \notin A_i$)
\STATE \quad $\sum_{i\in N} x_{c,i} = w(c)$ \qquad ($\forall c \in W \cup \{c'\}$)
\STATE \quad $\sum_{i\in N} x_{c,i} = 0$ \qquad ($\forall c \notin W\cup \{c'\}$)
\STATE \quad $x_i = \sum_{c\in C} x_{ci}$ ($\forall i \in N$)
\STATE \quad $s_{c'} \geq x_i$ \qquad ($\forall i \in N$)
\STATE Let $c^* \in C^*$ \COMMENT{break ties arbitrarily}
      \STATE $W \longleftarrow W \cup \{c^*\}$
    \ENDWHILE
    \RETURN $W$
	\end{algorithmic}
	\caption{Generalized Phragmen's sequential rule for PB (GPseq).}
	\label{algorithm:phragmen}
\end{algorithm}

Our generalized rule is referred to as GPseq (Generalized Phragmen's sequential rule)
and proceeds as follows.
Items are iteratively added until no item can be added without exceeding the budget limit.
An item that is added is required to spread its cost among voters who approve it.
When an item is considered to be added to the set of selected items,
we check what will be the maximum cost received by a voter.
We select the item that minimizes the maximum cost received by voters (we discuss tie breaking later).
The rule is also shown in Algorithm~\ref{algorithm:phragmen}.

\begin{remark}
When distributing the cost of the current item, we are allowed to redistribute the cost of the already-chosen items; this can be directly implemented using a linear program (as in Algorithm~\ref{algorithm:phragmen}), but also shown to be polynomial-time solvable using a combinatorial argument, as shown by Brill et al.~\cite{BFJL16a}.
\end{remark}

\begin{remark}
Currently, Line~\ref{line:three} in Algorithm~\ref{algorithm:phragmen} does not consider items $c$ which are not approved by any voter.
As can be seen from the proof of Proposition~\ref{prop:algPhragmenimpliesLocal BPJRL},
leaving the algorithm as is still results in it satisfying Local-BPJR.
Perhaps unsatisfactory,
however,
the resulting budgets might be non-exhaustive due to some items $c$ which are not approved by any voter (but fits within the budget limit).

One possible fix is by including a post-processing phase,
specifically looking for such items at the end of the algorithm and adding them exhaustively.
In situations where not approving an item  simply means not caring for him,
it might makes sense;
however, if not approving an item actually means disapproving him,
then it might reduce the social welfare.
\end{remark}

Before we prove that GPseq satisfies Local-BPJR-L,
the next example shows that, unfortunately, it does not satisfy BPJR-L.

\begin{example}\label{proposition:phragmen seq do not satisfy pjr b}
Consider the following instance with $C=\{a, b, c,d\}$ and $w(a)=2, w(b)=w(c)=1.5$ and $w(d)=1$.
Let there be 6 voters with approvals
\begin{align*}
A_1=\ldots=A_4&=\{a,b\}~~~\quad
A_5=A_6=\{c\}.
\end{align*}
If GPseq is run with $L=3$, then in the first iteration $W_1=\{b\}$, and in the second we add item $c$. Thus the computed budget is:
$$W=\{b, c\} \qquad \text{ and } \qquad w(W)=3.$$
This budget does not satisfy BPJR-W since the group of voters $V'=\{1, \ldots, 4\}$ is of size $\ge 2 \cdot n/w(W)=4$ and unanimously support a bundle $\{a\}$ with cost $2$ but they were only represented by a bundle of cost $1.5$ (i.e., item $b$).
\end{example}

\begin{proposition}\label{prop:algPhragmenimpliesLocal BPJRL}
	GPseq satisfies Local-BPJR-L.
\end{proposition}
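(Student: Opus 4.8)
The plan is to run a contrapositive/contradiction argument directly against the definition of Local-BPJR-L, using the "load balancing" interpretation of GPseq. Suppose the returned budget $W$ violates Local-BPJR-L: there is an $\ell \in [1,L]$ and a cohesive group $V'$ with $|V'| \ge \ell n / L$, unanimously approving some bundle $C^* \subseteq \cap_{i\in V'} A_i$ with $w(C^*) \le \ell$ of maximal cost among such bundles, yet with $W' := (\cup_{i\in V'}A_i)\cap W$ satisfying $w(W') < w(C^*)$. The goal is to derive a contradiction by tracking the total "load" (the $x_i$ values) that GPseq places on the voters of $V'$ throughout its execution.

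The key quantitative step is the following. At termination, consider the final feasible load distribution $(x_{c,i})$ certifying the last item's score. The total load over all voters equals $w(W)$, since every chosen item $c$ distributes exactly $w(c)$ units among its approvers. Because GPseq always picks the item minimizing the maximum per-voter load, a standard averaging argument (as in Brill et al.\ for Phragm\'en's sequential rule) shows that when item $c^*$ was the last one added, every voter in $N$ carried load at least $s_{c^*}$ just before $c^*$ was chosen, or else $c^*$ could have been spread more cheaply; combined with the termination condition this yields a lower bound on the minimum per-voter load, hence on $w(W)/n$. First I would make this precise: I expect to show that at termination every voter $i$ has $x_i \ge w(W)/n$ is false in general, so instead the right statement is about voters in $V'$ specifically --- namely that the voters in $V'$ collectively carry load at least $\ell$ unless there was still "room" to add an item of $C^*$, contradicting the while-loop exit condition. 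The sub-bundle $C^*$ is relevant here: since $w(C^*) \le \ell$ and $|V'| \ge \ell n/L \ge w(C^*) n / L$, the items of $C^*$ are "affordable" by the group in the load sense, so if $W'$ is too cheap then at the moment GPseq stopped (or at the moment it last raised the group's load past the threshold) some item of $C^* \setminus W$ had a score no larger than the item actually chosen, and it fit within the budget --- contradicting either the choice rule or the exit condition of the loop.

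Concretely, the steps in order: (1) assume a Local-BPJR-L violation with parameters $\ell, V', C^*, W'$ and fix the maximal-cost affordable sub-bundle; (2) observe $\cup_{i\in V'} A_i \supseteq C^*$, so any item of $C^*$ not yet chosen is a legitimate candidate in $C''$ in every iteration where it fits the budget; (3) track the total load on $V'$: each item in $W'$ contributes at most its full cost to $V'$-voters, so the $V'$-load is at most $w(W')$, strictly less than $w(C^*) \le \ell$; (4) use the minimizing choice of GPseq plus $|V'| \ge \ell n/L$ to argue that, had an item $c \in C^* \setminus W'$ been available (budget-wise) at the final step, its resulting max-load would have been at most $(w(W') + w(c)) \big/ |V'|$-ish, which is small enough to beat the actual winner or to re-trigger the loop; (5) conclude that $C^* \subseteq W$, whence $w(W') \ge w(C^*)$, the desired contradiction; the only escape is $w(W') < w(C^*)$ forced by the budget limit $w(W)+ (\text{something}) > L$, which I would rule out by the same counting inequality used in the proof of Proposition~\ref{proposition:exhaustive pjr b always exist}: $|V'| \le n - |R| < \ell n/L \le |V'|$, where $R$ is the set of "adequately loaded" voters.

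The main obstacle I anticipate is step (4): pinning down exactly what "max-load if we had added $c$" means when GPseq is permitted to redistribute the loads of previously chosen items (the LP/combinatorial flexibility from the Brill et al.\ remark). I would handle this by working with the load vector at the iteration in question rather than the final one, and invoking the monotonicity fact implicit in Phragm\'en's rule that the sequence of minimal max-loads $s_{c^*}$ is non-decreasing across iterations; this lets me compare the hypothetical addition of a $C^*$-item against the actual choice using only the totals $w(W')$ and $|V'|$, avoiding any explicit construction of the redistributed loads. A secondary subtlety is the non-exhaustiveness caveat noted in the remark after Algorithm~\ref{algorithm:phragmen}: items unapproved by anyone are never added, but they are also irrelevant to any cohesive group's $C^*$ (since $C^* \subseteq \cap_{i \in V'} A_i$ consists of approved items), so the Local-BPJR-L guarantee is unaffected --- I would state this explicitly to close the gap.
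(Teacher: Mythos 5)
Your overall strategy --- assume a Local-BPJR-L violation, track the Phragm\'en loads, and derive a contradiction with GPseq's min-max choice via an averaging-plus-exchange argument --- is the same one the paper uses, and several ingredients are right: the total load on $V'$ is at most $w(W')<\ell$, spreading a missing item $c^*\in W''\setminus W$ evenly over $V'$ costs each $V'$-voter at most $\ell/|V'|\le L/n$, and the contradiction must come from comparing this against what GPseq actually did. But two steps of your plan have genuine gaps. First, the iteration at which to run the exchange is never pinned down, and ``the final step'' is the wrong choice: since $c^*$ is approved by all of $V'$ and $c^*\notin W$, the loop's exit condition forces $w(W)+w(c^*)>L$, so at termination $c^*$ is never ``available budget-wise'' and your step (4) is vacuous there. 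The paper instead fixes the \emph{first} iteration $j$ with $w(W^{j-1})+w(c^*)\le L$ but $w(W^j)+w(c^*)>L$, and at that iteration obtains the lower bound $s_j^*>L/n$ by averaging: all the weight of $W^j\setminus(W'\cap W^j)$ must sit on the at most $n-|V'|\le \frac{n}{L}(L-\ell)$ voters outside $V'$, and that weight exceeds $L-\ell$. Your proposed substitute --- the cardinality inequality $|V'|\le n-|R|<\ell n/L$ borrowed from Proposition~\ref{proposition:exhaustive pjr b always exist} --- is about discretely represented voters and does not transfer to fractional loads without essentially redoing this averaging.

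Second, and more seriously, the exchange at iteration $j$ alone does not close the proof. The alternative max load after adding $c^*$ is $\max\big(\tilde s^*_{j-1},\,\max_{i\in V'}{x_i^j}'\big)$, where $\tilde s^*_{j-1}$ is the load already sitting on voters outside $V'$; if that quantity already exceeds $L/n$ (which happens, e.g., when few voters outside $V'$ approve expensive items chosen early), no contradiction arises at iteration $j$. The monotonicity of the minimal max-loads that you invoke cuts the wrong way here: it gives ${s_j^*}'\ge s_{j-1}^*>L/n$, so you cannot conclude ${s_j^*}'<s_j^*$. The paper handles this by backtracking to the first iteration $t<j$ at which the max load crossed $L/n$, observing that $c^*$ still fits within the budget there (precisely because $t<j$ and $j$ is the first infeasible stage), and deriving the contradiction with the minimizing choice at iteration $t$ instead. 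This case split and backtracking step is the heart of the argument and is absent from your plan; without it the proof does not go through.
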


\begin{proof}
Assume, towards a contradiction, that there is a PB scenario where GPseq outputs a budget $W$ which violates Local-BPJR-L.
By definition, this means that there is a number $\ell\in [1, L]$, a set of voters $|V'|\ge \ell n/L$,
and some $W''\supset W':=(\cup_{i\in V'} A_i)\cap W$ such that 
$$W''\in \max\Big\{w(C')\, :\, C'\subseteq \cap_{i\in V'}A_i \, \text{ and }\, w(C')\le \ell\Big\}.$$

First, observe that it must be the case that $w(W')<\ell$ (since otherwise no such $W''$ could be feasible).
As $W' \subset W''$, there must be some $c^* \in W'' \setminus W'$,
and, as $W' =(\cup_{i\in V'} A_i)\cap W$, it must be that $c^* \notin W$.
Further,
since $W' \cup \{c^*\} \subseteq W''$, the following holds:
\begin{align}\label{equation1: algPhragmenimpliesLocal BPJRL}
  w(W')+w(c^*)\le w(W'')\le  \ell.
\end{align}

Recall that GPseq works in iterations, where in each iteration another item is added to the intermediate budget $W$.
Further, before the first iteration it is possible to add $c^*$ to the partial budget, since the partial budget is empty;
while after the last iteration it is surely not possible anymore, as if it was so, then GPseq would not terminate.
The proof would now follow by considering the iteration at which the corresponding `switch' had happened,
and would show that the maximum voter spread could have been smaller if $c^*$ would have been chosen instead of the other item which was chosen;
this would contradict the way by which GPseq works.

To be more formal,
the following notation is helpful.
Denote the intermediate budget at the completion of the $j$th iteration of GPseq by $W^j$;
and by $x_i^j$ and $s_j^*$, the spread of voter $i$ and the maximum voter spread (respectively; specifically, $s_j^* = \max_{i \in V} x_i^j$)
at the completion of that iteration.

Supplied with the above notation,
let $j$ be the index of the first iteration for which the following hold:
\begin{align}
\label{equation: jearlystage2}
w(W^{j-1}) + w(c^*)&\le L,\\
\label{equation: jearlystage1}
w(W^j) + w(c^*)&>L.
\end{align}
Recall that such $j$ must exist as otherwise GPseq would not have terminated.

Next we compute a lower bound on the maximum voter spread in $W^j$.
As $W' = (\cup_{i \in V'} A_i) \cap W$ it follows that the total weight of the items in $W_j \setminus (W' \cap W^j)$ is spread over at most
$n - |V'|$ voters (those corresponding to voters not in $V'$).
Then, from averaging, if follows that there must be at least one voter $k\in V\backslash V'$ for which the following hold:

\begin{align*}
x_k^j&\ge \frac{w(W^j)-w(W'\cap W^j)}{n-|V'|}\\
&>\frac{L-w(c^*)-w(W'\cap W^j)}{n-|V'|}&&\text{ by (\ref{equation: jearlystage1})}\\
&\ge \frac{L-\ell}{n-|V'|}&&\text{by (\ref{equation1: algPhragmenimpliesLocal BPJRL})}\\
&\ge \frac{L-\ell}{\frac{n}{L}(L-\ell)}&&\text{since $|V'|\ge \ell\frac{n}{L}$,}\\
&= \frac{L}{n}.
\end{align*}
As $s_j^* = \max_{i \in V} x_i^j$,
it follows that $$s_j^*> \frac{L}{n}.$$


%


Next we compute an upper bound on the maximum voter spread if,
instead of adding the item added in the $j$th iteration,
we would add $c^*$ at that iteration. Thus, the budget at the completion of the $j$th iteration will be ${W^j}'=W^{j-1}\cup\{c^*\}$.
Let us denote the alternative values of $x_i^j$ and $s_j^*$ by ${x_i^j}'$ and ${s_j^*}'$, respectively.

Recall that GPseq considers minimizing the maximum voter spread,
thus to compute an upper bound it is sufficient to consider one way of spreading $c^*$'s weight;
we will consider spreading it evenly among the voters in $V'$.
That way, the spread of voters $k \in V\backslash V'$ remains unchanged to the previous iteration $j-1$;
i.e., for $k \in V\backslash V'$ it holds that:
\begin{align}\label{equation: equalV'}
{x_k^j}'=x_{k}^{j-1}.
\end{align}
Let 
\begin{align}\label{equation: maxofothervotersj-1}
\tilde{s}_{j-1}^*=\max_{k\notin V'} x_k^{j-1}=\max_{k\notin V'} {x_k^j}',
\end{align}
this denotes the maximum spread among voters in $V\backslash V'$ at the $(j-1)$th iteration (or equivalently the alternative maximum spread of such voters in the $j$th iteration). Note that some weight of items in $W'\cap W^{j-1}$ may be distributed to voters in $V\backslash V'$ at this $(j-1)$th iteration.

Further, by redistributing the weight at the $(j-1)$th iteration
$$\sum_{i\in V'} x_i^{j-1}\le w(W'\cap W^{j-1})$$
evenly among voters in $V'$ we can maintain equality in (\ref{equation: equalV'}). We then construct the alternative value of spreads for voters in $V'$ by evenly distributed the weight of item $c^*$ as well. Thus, for all $i\in V'$
\begin{align*}
{x_i^j}'&=\frac{\sum_{i\in V'} x_i^{j-1}}{|V'|}+\frac{w(c^*)}{|V'|}\\
&\le \frac{w(W'\cap W^{j-1})}{|V'|}+\frac{w(c^*)}{|V'|}\\
&\le \frac{\ell}{|V'|}&&\text{by (\ref{equation1: algPhragmenimpliesLocal BPJRL})}\\
&\le \frac{L}{n}.
\end{align*}
%
%

%
Recall (\ref{equation: maxofothervotersj-1}), it follows that under this spread,
$${s_j^*}'=\max\Big(\tilde{s}_{j-1}^*,\max_{i\in V'}({x_i^j} ')\Big).$$

Next we consider two cases.
First, if $\tilde{s}_{j-1}^* \le \frac{L}{n}$, then we are done as this would imply that 
$${s_j^*}'\le \frac{L}{n}<s_j^*,$$
which contradicts the fact the GPseq minimizes $s_j^*$ at each iteration.

For the second case,
suppose that $\tilde{s}_{j-1}^*>\frac{L}{n}$ and so 
$$ s_{j-1}^*>\frac{L}{n}.$$
This would imply that at some earlier iteration, say $t<j$, we have 
$$s_{t-1}^*\le \frac{L}{n} \quad \text{ and } s_t^*>\frac{L}{n}$$
(as initially no weight is spread on the voters).
Further, as $c^* \notin W$ and $W^{t-1} \subseteq W$ and $W^t \subseteq W$,
for the intermediate budgets $W^{t-1}$ and $W^t$ it holds that
$$w(W^{t-1})<w(W^t)\le w(W^{j-1})\le L-w(c^*).$$
This means that adding the item $c^*$ does not cause us to exceed the budget limit $L$ since $t<j$ and $j$ is the earliest stage such that (\ref{equation: jearlystage2}) and (\ref{equation: jearlystage1}) are satisfied.
Thus, if at stage $t$ instead item $c^*$ was added to the intermediate budget $W^t$ then by spreading the additional weight $w(c^*)$ among agents in $V'$ we can attain the following alternative maximum voter spread:
$${s_t^*}' \le\frac{L}{n}<s_t^*.$$
This contradicts the fact that GPseq minimizes $s_j^*$ at each iteration. 

Overall,
we conclude that no such set $V'$ can exist and thus Local-BPJR-L is satisfied. 
\end{proof}

\paragraph{Tie breaking}

Notice that both Theorem~\ref{prop:algPhragmenimpliesLocal BPJRL} and Example~\ref{proposition:phragmen seq do not satisfy pjr b}
are
oblivious to the tie breaking used by GPseq.
Arguably,
its tie-breaking can result in some not-intuitive behavior,
as the following example shows.
Let $C=\{c_1, c_2\}$ with $w(c_1)=1$ and $w(c_2)=2$ and let there be voters $V=\{1,\ldots, 6\}$ with the following ballots:
\begin{align*}
  A_1=\cdots =A_4&=\{c_2\}~~~ \quad
  A_5=A_6=\{c_1\}.
\end{align*}
Then,
if we have a budget limit $L=2$, then the items to be chosen for Phragmen-budget will be either $c_1$ or $c_2$.
Now,
if we favor cheaper items, or simply choose $c_1$ arbitrarily,
then we will `satisfy' only 2 voters;
whilst, perhaps a more intuitive outcome would be $W'=\{c_2\}$ which is of a cost $w(W)=L=2$ and satisfies 4 voters.
Both budgets, however, satisfy BPJR-L in this case.



\subsection{W-variants and Relations between Axioms}
\label{section:relations}

So far our axioms of proportionality depended on an external budget limit $L$.
Indeed, being based on $L$, such definitions are easily communicated, as usually $L$ is known before hand;
further, each group of voters can easily compute and appreciate the fraction of $L$ which they can claim for themselves.

There is, however, some merit in being oblivious to $L$, by considering proportionality axioms which are oblivious to $L$, and are properties of the budget itself (with respect to the electorate, of course). This is possible through what we refer to as W-variants.
Specifically, instead of considering groups of voters of $\ell \cdot n / L$ (which, as our definitions above state, deserve $\ell$ units of the budget), in our W-variants we concentrate not on the external budget limit~$L$ but on the actual total cost of the budget $w(W)$, and consider groups of voters of $\ell \cdot n / w(W)$, which deserve $\ell$ units of the budget.
We feel that these definitions, which are based on $w(W)$ are, mathematically speaking, more elegant,
as they are properties of the budget itself and are oblivious to the externally-imposed budget limit. 
Formally, we suggest the following definitions (which are analogous to their L-variants described in the previous sections).

\begin{definition}[Strong-BJR-W]
A budget $W$ satisfies \emph{Strong-BJR-W} if there exists no set of voters $V'\subseteq V$ with
$|V'| \geq n / w(W)$, such that $w\big(\cap_{i \in V'} A_i\big) \geq 1$ but $w\big((\cup_{i \in V'} A_i) \cap W\big) =0$. 
\end{definition}

\begin{definition}[Strong-BPJR-W]
A budget $W$ satisfies \emph{Strong-BPJR-W} if for all $\ell\in [1, w(W)]$ there does not exist a set of voters $V'\subseteq V$ with $|V'|\ge \ell \, n/w(W)$, such that $w\big(\cap_{i\in V'} A_i\big)\ge \ell$ but 
$w\big( (\cup_{i\in V'} A_i)\cap W\big)<\ell$.
\end{definition}

\begin{definition}[BJR-W]
A budget $W$ satisfies \emph{BJR-W} if there exists no set of voters $V'\subseteq V$ with
$|V'| \geq n / w(W)$ such that $w\Big(\cap_{i \in V'} A_i\Big) \geq 1$,
$w\Big((\cup_{i \in V'} A_i) \cap W\Big) =0$,
 and there exists some $c\in \cap_{i \in V'} A_i$ with $w(c) = 1$.
\end{definition}

\begin{definition}[BPJR-W]
A budget $W$ satisfies \emph{BPJR-W} if for all $\ell\in [1, w(W)]$
there exists no set of voters $V'\subseteq V$ with $|V'|\ge \ell \, n/w(W)$ such that $w\big(\cap_{i\in V'} A_i\big)\ge \ell$
and
$$w\big( (\cup_{i\in V'} A_i)\cap W\big)< \max\Big\{w(C')\, :\, C'\subseteq \cap_{i\in V'}A_i \, \text{ and }\, w(C')\le \ell\Big\}.$$
\end{definition}

\begin{definition}[Local-BPJR-W]
A budget $W$ satisfies \emph{Local-BPJR-W} if for all $\ell \in [1,L]$ there exists no set of voters $V'\subseteq V$ such that
$W'=(\cup_{i\in V'} A_i)\cap W$, $|V'|\ge \ell \, n/w(W)$ and there exists some $W''\supset W'$ such that
$$W''\in \max\Big\{w(C')\, :\, C'\subseteq \cap_{i\in V'}A_i \, \text{ and }\, w(C')\le \ell\Big\}.$$
\end{definition}

Notice that each L-variant implies its corresponding W-variant.

\begin{proposition}
  Any L-variant implies its W-variant.
\end{proposition}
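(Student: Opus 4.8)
The plan is to argue by contraposition, using the single structural fact that a feasible budget satisfies $w(W)\le L$; everything else is bookkeeping. From $w(W)\le L$ we get $n/w(W)\ge n/L$ and $[1,w(W)]\subseteq[1,L]$, so every group-size threshold and every range of $\ell$ appearing in a W-variant is dominated by the corresponding quantity in the matching L-variant. It then suffices to show that any witness to a violation of a W-variant at the budget $W$ is also a witness to a violation of the corresponding L-variant.

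First I would dispose of the degenerate case $W=\emptyset$. Since the cost function is normalised so that $\min_{c\in C}w(c)=1$, a non-empty budget has $w(W)\ge 1$; if $W=\emptyset$ then $w(W)=0$, the quantity $n/w(W)$ is vacuous, no voter set attains the required size, and so $W$ trivially satisfies every W-variant. Hence assume $w(W)\ge 1$.

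For the two JR-type axioms (Strong-BJR and BJR): the conditions imposed on a ``bad'' set $V'$ --- namely $w(\cap_{i\in V'}A_i)\ge 1$, $w((\cup_{i\in V'}A_i)\cap W)=0$, and, for BJR, the existence of some $c\in\cap_{i\in V'}A_i$ with $w(c)=1$ --- are literally the same in the L-variant and in the W-variant; only the size threshold changes, from $|V'|\ge n/L$ to $|V'|\ge n/w(W)$. Since $n/w(W)\ge n/L$, a set $V'$ witnessing a violation of Strong-BJR-W (resp.\ BJR-W) automatically satisfies $|V'|\ge n/L$ and hence witnesses a violation of Strong-BJR-L (resp.\ BJR-L). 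Taking contrapositives gives the claim for these two axioms.

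For the three PJR-type axioms (Strong-BPJR, BPJR, Local-BPJR): a witness is now a pair $(\ell,V')$. In each W-variant the admissible $\ell$ lie in a range contained in the range used by the L-variant, and the size threshold satisfies $\ell n/w(W)\ge \ell n/L$, so $|V'|\ge\ell n/w(W)$ forces $|V'|\ge \ell n/L$; the condition $w(\cap_{i\in V'}A_i)\ge\ell$ is identical in both. For the under-representation inequality, its right-hand side is a maximum of $w(C')$ over bundles $C'\subseteq\cap_{i\in V'}A_i$ subject to a cost cap, and this maximum is non-decreasing in the cap; the cap used in the W-variant ($w(C')\le\ell$) is no larger than the cap used in the corresponding L-variant (which equals $\ell$ for Strong-BPJR and Local-BPJR, and equals $|V'|L/n\ge\ell$ for BPJR), so the L-variant's right-hand side is at least the W-variant's. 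Hence $w((\cup_{i\in V'}A_i)\cap W)$ lying strictly below the W-variant's right-hand side forces it strictly below the L-variant's. Thus $(\ell,V')$ witnesses a violation of the matching L-variant, and contrapositives finish the argument. The proof is essentially routine; the only points needing a moment's care are the degenerate budget $W=\emptyset$ (handled above) and checking that the $\max$-over-bundles term in the BPJR inequalities moves in the favourable direction, which follows from its monotonicity in the cost cap together with $w(W)\le L$. I do not anticipate any real obstacle.
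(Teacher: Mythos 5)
Your proof is correct and takes essentially the same route as the paper's: both arguments rest on the single fact that $w(W)\le L$, so every witness to a violation of a W-variant (a large-enough cohesive group, possibly paired with an $\ell$) is automatically a witness to a violation of the corresponding L-variant, and the claim follows by contraposition. You are in fact somewhat more careful than the paper, which does not explicitly treat the empty budget or check that the differing cost caps ($\ell$ versus $|V'|L/n$) in the two BPJR definitions move the $\max$-over-bundles term in the favourable direction.
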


\begin{proof}
In the L-variants $\ell \in [1, L]$ while in the W-variants $\ell \in [1, w(W)]$,
and $w(W) \leq L$ always holds.
Further, the sets $V'$ considered in the L-variants satisfy $|V'| \geq \ell n / L$
while those in the W-variants satisfy $|V'| \geq \ell n / w(W)$;
again, $w(W) \leq L$ holds,
it follows that $n/L \leq n/w(W)$.
Therefore, each L-variant considers all sets considered by its corresponding W-variant.
\end{proof}

Notice that W-variants always exist, as, for example, the empty budget $B = \emptyset$ satisfies them (as our voter set is always finite). Thus, it makes more sense to require also exhaustiveness from such budgets; for example, considering the complexity of computing exhaustive Local-BPJR-W budgets. Indeed, tractability of computing an L-variant implies tractability of computing an exhaustive W-variant.
On the other hand, exhaustive W-variants might be computationally easier to compute than their corresponding L-variants; while we know that computing exhaustive Strong-BJR-W budgets and exhaustive Strong-BPJR-W budgets is NP-hard, we could not modify the proof of hardness of computing BPJR-L budgets (Theorem~\ref{BPJR-L NP-hard}) to apply to exhaustive BPJR-W budgets as well. We conjecture, however, that indeed computing exhaustive BPJR-W is NP-hard as well.

\nimrodout{
The only separation we have now is the following.
Next we show that the problem of computing an exhaustive BPJR-W is NP-hard in the weak sense;
we conjecture that it is in fact NP-hard in the strong sense.
	
\begin{proposition}\label{weak NP}
  Computing an exhaustive BPJR-W budget is weakly NP-hard.
\end{proposition}

\begin{proof}
We describe a reduction from the weakly NP-hard problem Partition which, given integers $x_1, \ldots, x_m$ whose sum is $2B$, asks to decide whether there is a subset of them which sum to $B$.

We construct a PB scenario with items $c_1, \ldots, c_m$, where item $c_j$ ($j \in [m]$) costs $x_j$ and further $2B$ unit-cost items, $d_1, \ldots, d_{2B}$. We have $2$ voters: voter $v_1$ approving items $c_1, \ldots, c_m$ and voter $v_2$ approving items $d_1, \ldots, d_{2B}$. The budget limit $L$ is $2B$.
This finishes the construction.

Given a solution to the Partition instance, consisting of a subset $X' \subseteq X$ with $\sum_{x' \in X'} x' = B$, we construct a budget $W = X' \cup \{d_1, \ldots, d_B\}$. $W$ costs $2B = L$ and thus is exhaustive and satisfies BPJR-W as $w(W) = L$ and each voter out of the two can claim half of the total budget as her representative.

For the other direction, notice that any exhaustive budget satisfies $w(W) = L$ and thus each of the voters shall be represented by items of total cost at least $B$; this means that the representatives of voter $v_1$ correspond to a solution to the Partition instance.
\end{proof}
}


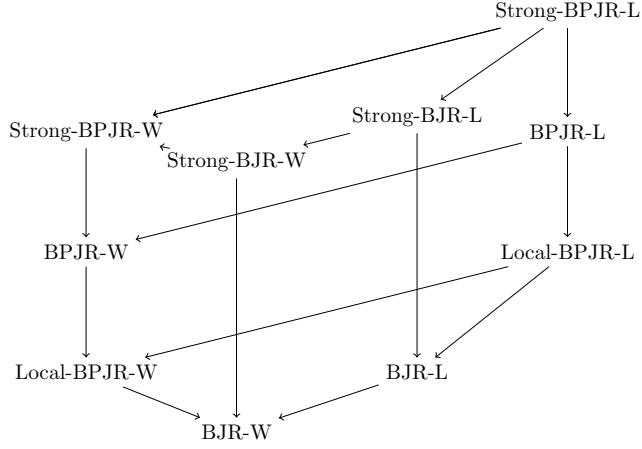
\begin{figure}
\begin{center}
\scalebox{0.8}{
\begin{tikzpicture}

\tikzstyle{pfeil}=[->,>=angle 60, shorten >=1pt,draw]
\tikzstyle{onlytext}=[]

\node[onlytext] (S-BPJR-L) at (4,1) {Strong-BPJR-L};
\node[onlytext] (BPJR-L) at (4,-1) {BPJR-L};
\node[onlytext] (Local-BPJR-L) at (4,-3) {Local-BPJR-L};
\node[onlytext] (BJR-L) at (1.5,-5) {BJR-L};
\node[onlytext] (S-BJR-L) at (1.5,-0.75) {Strong-BJR-L};
    
\node[onlytext] (S-BPJR) at (-4,-1) {Strong-BPJR-W};
\node[onlytext] (BPJR) at (-4,-3) {BPJR-W};
\node[onlytext] (Local-BPJR) at (-4,-5) {Local-BPJR-W};
\node[onlytext] (BJR) at (-1.5,-6) {BJR-W};
 \node[onlytext] (S-BJR) at (-1.5,-1.5) {Strong-BJR-W};
  
\draw[->] (S-BPJR-L) -- (BPJR-L) ;
\draw[->] (BPJR-L) -- (Local-BPJR-L) ;
\draw[->] (Local-BPJR-L) -- (BJR-L) ;
\draw[->] (S-BPJR-L) -- (S-BJR-L) ;

\draw[->] (S-BPJR-L) -- (S-BPJR) ;
\draw[->] (BPJR-L) -- (BPJR) ;
\draw[->] (Local-BPJR-L) -- (Local-BPJR) ;
\draw[->] (S-BPJR-L) -- (S-BPJR) ;
  
\draw[->] (S-BPJR) -- (BPJR) ;
\draw[->] (BPJR) -- (Local-BPJR) ;
\draw[->] (Local-BPJR) -- (BJR) ;
\draw[->] (S-BPJR) -- (S-BJR) ;

\draw[->] (S-BJR-L) -- (S-BJR) ;
\draw[->] (BJR-L) -- (BJR) ;

\draw[->] (S-BJR) -- (BJR) ;
\draw[->] (S-BJR-L) -- (BJR-L) ;

\end{tikzpicture}}
\end{center}
\caption{\label{fig:relations} Logical relations between proportionality concepts.
An arrow from (A) to (B) denotes that concept (A) implies concept (B).}
\end{figure}

We conclude the section by further discussing the logical relations between the axioms presented and discussed in the sections above. These relations are also pictorially represented in Figure~\ref{fig:relations}.
	
\begin{proposition}
  Strong-BPJR-(W/L) implies BPJR-(W/L) which implies Local-BPJR-(W/L) which implies BJR-(W/L).
\end{proposition}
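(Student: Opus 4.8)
The plan is to prove each of the three links in the chain by contraposition: given a budget $W$ that violates the weaker axiom, I will turn its witness into a witness that $W$ violates the stronger axiom. The $W$-chain will follow from the $L$-chain by substituting $w(W)$ for $L$ throughout (if $w(W)=0$ then $W=\emptyset$ and every $W$-axiom holds vacuously, so assume $w(W)>0$). It helps to abbreviate $\mu(V',\ell):=\max\{w(C'):C'\subseteq\cap_{i\in V'}A_i,\ w(C')\le\ell\}$, the quantity occurring in the $\mathit{BPJR}$ definitions, and to record the trivial bound $0\le\mu(V',\ell)\le\ell$.

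For the link Strong-BPJR-$L$ $\Rightarrow$ BPJR-$L$: if $W$ violates BPJR-$L$ there are $\ell\in[1,L]$ and $V'$ with $|V'|\ge\ell n/L$, $w(\cap_{i\in V'}A_i)\ge\ell$ and $w((\cup_{i\in V'}A_i)\cap W)<\mu(V',\ell)$; since $\mu(V',\ell)\le\ell$ this already gives $w((\cup_{i\in V'}A_i)\cap W)<\ell$, so the same $(\ell,V')$ witnesses a violation of Strong-BPJR-$L$. This link is immediate and needs no new idea.

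For the link BPJR-$L$ $\Rightarrow$ Local-BPJR-$L$: suppose $W$ violates Local-BPJR-$L$, witnessed by $\ell\in[1,L]$, a set $V'$ with $|V'|\ge\ell n/L$, the bundle $W'=(\cup_{i\in V'}A_i)\cap W$, and a set $W''$ with $W'\subsetneq W''\subseteq\cap_{i\in V'}A_i$ and $w(W'')\le\ell$. The idea is to re-invoke the BPJR-$L$ condition at the tighter level $\ell^{\star}:=w(W'')$. Since $W''$ strictly contains $W'$ and $\min_{c\in C}w(c)=1$, we have $\ell^{\star}\ge w(W')+1\ge1$, and $\ell^{\star}\le\ell\le L$, so $\ell^{\star}\in[1,L]$. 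Then $|V'|\ge\ell n/L\ge\ell^{\star}n/L$, and $w(\cap_{i\in V'}A_i)\ge w(W'')=\ell^{\star}$; moreover $W''$ is a feasible bundle for $\mu(V',\ell^{\star})$ of cost exactly $\ell^{\star}$, so together with the universal bound $\mu(V',\ell^{\star})\le\ell^{\star}$ we get $\mu(V',\ell^{\star})=w(W'')$. Finally $w((\cup_{i\in V'}A_i)\cap W)=w(W')<w(W'')=\mu(V',\ell^{\star})$, so $(\ell^{\star},V')$ witnesses a violation of BPJR-$L$. I expect this to be the one step needing care — the right move is choosing $\ell^{\star}=w(W'')$, which pins the cost-maximal feasible bundle at that level to $W''$; note the argument only uses that $W''$ is a strict superset of $W'$ lying in $\cap_{i\in V'}A_i$ with cost at most $\ell$, hence is robust to how one reads ``$W''\in\max\{\dots\}$''.

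For the link Local-BPJR-$L$ $\Rightarrow$ BJR-$L$: suppose $W$ violates BJR-$L$, witnessed by $V'$ with $|V'|\ge n/L$, $w(\cap_{i\in V'}A_i)\ge1$, $w((\cup_{i\in V'}A_i)\cap W)=0$, and some $c^{\star}\in\cap_{i\in V'}A_i$ with $w(c^{\star})=1$. Take $\ell=1$ and the same $V'$; then $|V'|\ge n/L=\ell n/L$, and $W'=(\cup_{i\in V'}A_i)\cap W=\emptyset$ because its cost is $0$ while all costs are positive. Put $W''=\{c^{\star}\}\subseteq\cap_{i\in V'}A_i$: it has cost $1\le\ell$, strictly contains $W'=\emptyset$, and is cost-maximal among subsets of $\cap_{i\in V'}A_i$ of cost at most $1$, since (as $\min_{c\in C}w(c)=1$) any such subset is either empty or a single unit-cost item; thus $w(W'')=\mu(V',1)$. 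Hence $(\ell=1,V',W'')$ witnesses a violation of Local-BPJR-$L$. Chaining the three links, and repeating the three arguments with $w(W)$ replacing $L$ for the $W$-variants, gives the proposition.
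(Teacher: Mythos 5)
Your proof is correct and takes essentially the same route as the paper: each implication is handled by contraposition, converting a witness against the weaker axiom into a witness against the stronger one (Strong-BPJR via $\mu(V',\ell)\le\ell$, Local-BPJR via a suitable bundle $W''$, BJR via the unit-cost item and $\ell=1$). You are in fact more explicit than the paper on the middle link, where the paper only remarks that BPJR ``considers more sets''; your choice of $\ell^{\star}=w(W'')$ pins down the witness cleanly.
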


\begin{proof}
The definition of BPJR-(W/L) includes a check on $V'$ which is not present in the definition of Strong-BPJR-(W/L), and thus is implied by it.
The definition of Local-BPJR-(W/L) considers specific sets $W''$ to represent certain groups, therefore is implied by BPJR-(W/L)
which consider more sets.

To show that Local-BPJR-(W/L) implies BJR-(W/L) we use the contrapositive by assuming that BJR-(W/L) is not satisfied.
This implies that there exists a group of voters $V'$ such that
$|V'|\ge n/w(W)$ (or $\ge n/L$),
$w(\cup_{i\in V'} A_i)\ge 1$,
and some $c\in \cap_{i\in V'} A_i$ with $w(c) = 1$ exists,
but nevertheless $w\Big((\cup_{i\in V'} A_i) \cap W\Big)=0$.
Recalling the definition of Local-BPJR-(W/L),
this implies that $W'=\emptyset$ and so Local-BPJR-(W/L) is not satisfied if there exists \emph{any} budget $W''$
(since every budget contains the empty set)
containing some subset of items $C'\subseteq \cap_{i\in V'} A_i$ and $w(C')\le \ell$ such that $w(W'')>w(W')$.
Such a budget always exists since a BJR-(W/L) budget always exists (Proposition~\ref{proposition:exhaustive pjr b always exist}),
and a budget satisfying BJR-(W/L) satisfies the conditions for $W''$ to fail $W$ as a Local BPJR-(W/L) budget.
\end{proof}

\section{Conclusions}

Participatory budgeting is an interesting and widely applicable setting, gaining growing attention from the research community and being more extensively deployed.
The axiomatic, normative study of methods of participatory budgeting is still lacking,
and issues of proportionality and representation are currently not well understood;
this is especially unfortunate as many times it is desirable to spend funds in a proportional way, taking into account issues of representativeness,
and not letting the majority control all the available budget.
Thus,
in this paper, we proposed several new, proportional representation axioms as well as efficient corresponding algorithms. Many of our results are summarized in Table~\ref{table:summary}. 

As we framed participatory budgeting as a generalization of multi-winner voting, our axioms and rules can also be viewed as interesting generalizations of work on multi-winner voting. Some of the interesting insights include the following:
  Whereas both PAV and Phragmen's sequential rule are considered compelling rules for approval-based multi-winner voting, the latter is more suitable in being extended to more general settings such as PB. Recalling our research motivation from Section~\ref{section:introduction},
where we stated our aim at finding the right axiom and corresponding rule for PB,
as a conclusion we can say that BPJR-L appears to be a compelling axiom for proportional representation of PB and GPseq seems to be a particularly useful and desirable rule in this context. 

We envisage further work on axiomatic and computational aspects of participatory budgeting. It will be interesting to explore the trade-offs between the axioms we proposed and other axioms that might be important to consider in these applications. It may also be useful to, theoretically and empirically, compare how different rules in the literature fare in terms of axiomatic properties. 



\bibliographystyle{plainnat}


%


\end{document}